\let\NAT@parse\undefined
\newtheorem{theorem}{Theorem}[section]
\title{\LARGE \bf
On the steady--state behavior of finite--control--set MPC\\ with an application to high--precision power amplifiers
}
\author{Duo Xu, Sander Damsma, Mircea Lazar  
\thanks{This work is funded by the EU Horizon 2020 research project IT2 (IC Technology for the 2nm Node), Grant agreement ID: 875999. 
}
\thanks{All authors are with the Control Systems Group, Department of Electrical Engineering, Eindhoven University of Technology, The Netherlands.}%
\thanks{E-mails: {\tt\small d.xu@tue.nl, sanderdamsma@gmail.com, m.lazar@tue.nl.} }
}
\begin{document}

\maketitle
\thispagestyle{empty}
\pagestyle{empty}
\begin{abstract}
Motivated by increasing precision requirements for switched power amplifiers, this paper addresses the problem of model predictive control (MPC) design for discrete--time linear systems with a finite control set (FCS). Typically, existing solutions for FCS--MPC penalize the output tracking error and the control input rate of change, which can lead to arbitrary switching among the available discrete control inputs and unpredictable steady--state behavior. To improve the steady--state behavior of FCS--MPC, in this paper we design a cost function that penalizes the tracking error with respect to a state and input steady--state limit cycle. We prove that if a suitable terminal cost is added to the FCS--MPC algorithm convergence to the limit cycle is ensured. The developed methodology is validated in direct switching control of a power amplifier for high--precision motion systems, where it significantly improves the steady--state output current ripple.    
\end{abstract}
\begin{keywords}
Model predictive control, Finite control set, Limit cycles, Stabilization, Power amplifiers.%
\end{keywords}

\section{INTRODUCTION}
Presently, the advances in modern control theory and microprocessors open up the application of MPC in power electronics \cite{MPC_role}. Depending on whether a separate modulator is employed or not, MPC for power electronics can be classified into two main categories: indirect, averaged--model based MPC and direct switching control MPC. Similar to classical control schemes for power electronics \cite{Spinu_Constrianed}, indirect MPC computes a continuous duty cycle to drive the switches via a pulse width modulator. One of the drawbacks using averaging techniques is that the performance of the system states during each sampling interval is omitted.
Thus, direct control of the switches with MPC was proposed, which helps to evaluate the ripple and constraint performance in a more precise fashion. This approach is referred to as finite--control--set  MPC (FCS--MPC) and it is the favored and widely researched MPC method in academia \cite{rodriguez2012predictive}. 
In \cite{FCSMPC-coupled-inductor-buck-boost}, FCS--MPC is implemented on a coupled inductor buck boost converter, where it improves the current tracking performance and reduces the converter power losses.
In \cite{FCSMPC-non-minimum-phase-boost}, a boost converter is considered, and FCS--MPC is shown to eliminate non-minimum phase behavior for output voltage control.
In \cite{FCSMPC-PMSM1,FCS-MPC-PMSM}, the FCS--MPC is implemented for a  Permanent Magnet Synchronous Machine, to enable optimal switching among various operation strategies.

Typically, the FCS-MPC problem is formulated as an online integer programming optimization problem, which grows in complexity significantly with the system dimension/number of switches and with the length of the prediction horizon.
A growing body of research in this field has been dedicated to address the numerical complexity aspects of FCS--MPC, see for example, \cite{7499836}, \cite{9104994} and the references therein.
\cite{FCSMPC-high-speed-FCSMPC} achieves the FPGA implementation based on the approximate dynamic programming, which enables the sampling times below 25 $\upmu$s.

Another relevant research direction for FCS--MPC is the cost function design so that steady--state behavior is improved. FCS--MPC mainly utilizes a cost function that penalizes the output tracking error and the discrete input rate of change \cite{Guidelines_FCS_MPC}. This avoids computation of a steady--state solution for the switched affine system, but it also results in unpredictable steady--state behavior, which is undesirable in high--precision applications \cite{9180048}. For such applications, it is desirable to control the steady--state solution of a power amplifier so that the output current ripple is minimized. Steady--state issues for FCS--MPC were originally considered in \cite{6256724}, which proposed intermediate sampling or integral error terms to improve steady--state behavior.

Recently, in \cite{limit_cycle} a method was proposed to design and compute optimal steady--state limit cycles for switched affine systems. 
To deal with the switched dynamics or, equivalently, with a finite control set, therein the steady--state solution is designed as a limit cycle that stays close to the desired output reference. 
Motivated by this recent idea and the current steady--state issues of FCS--MPC, in this paper we propose a time--varying cost function that penalizes the tracking error with respect to pre--computed state and input limit cycle trajectories. We prove that if a suitable terminal penalty is added and the open loop dynamics is stable, the developed FCS--MPC algorithm converges to the pre--computed limit cycle asymptotically. Simulation results for an industrial high--precision power amplifier show that the designed FCS--MPC algorithm can achieve a significantly smaller  steady--state output current ripple compared with the existing output tracking FCS--MPC design.

\section{PRELIMINARIES} \label{section2}

The sets of real and integer numbers are denoted as $\mathbb{R}$ and $\mathbb{Z}$ respectively. $\mathbb{Z}_{[1,n]}$ denotes the set of $n$ first natural numbers. $\{0,1\}$ denotes the binary set.
The modulo operator is defined as $c$ = $a$ mod $b$ where $c$ is the remainder of the Euclidean division between the integers $a$ and $b$.

The model of switched power amplifiers is captured by the following class of discrete--time linear systems with a finite control set:
\begin{equation} \label{equ:linear_sys}
\begin{split}
x(k+1) &= A x(k) + B u(k),
\\
y(k) &= C x(k),
\end{split}
\end{equation}
where $x(k) \in \mathbb{R}^{n}$ is the system state, i.e., capacitor voltages and inductor currents, $u(k) \in \{0,1\}^{m}$ denotes the discrete system inputs, which represent the switching state and $y(k) \in \mathbb{R}^{q}$ is the system output. Thus, the vector $u(k)$ belongs to a finite control set $\mathbb{U}$, and the number of its elements is $2^m$.

\subsection{Standard FCS-MPC for output tracking}
In order to define the finite control set MPC optimization problem, let $x_{0|k}=x(k)$ denote the system state at time $k$ and let $x_{i|k}$ ($y_{i|k}$) and $u_{i|k}$ denote the predicted state (output) and input at time $k+i$ over a prediction horizon $N$. The sequence of the predicted inputs $U_{k}$ can be expressed as
\begin{equation}
U_{k}=\left\{u_{0|k}, \ldots, u_{N-1|k}\right\}.
\end{equation}
To track an output reference $y_{ref}$, the cost function of the standard FCS-MPC is defined as \cite{Guidelines_FCS_MPC}
\begin{equation} \label{equ:standard_FCS_MPC_cost}
J\left(x(k), U_{k}, y_{ref}\right)=F_{T}\left(y_{N|k},y_{ref}\right)+\sum_{i=0}^{N-1} l\left(y_{i|k}, u_{i|k}, y_{ref}\right)
\end{equation}
where $F_{T}(\cdot)$ is the terminal cost and $l(\cdot)$ is the stage cost. These cost functions can be formulated as follows
\begin{equation}\label{equ:terminal_cost}
F_{T}\left(y_{N|k}, y_{ref}\right)=\left(y_{N|k}-y_{ref}\right)^\top P\left(y_{N|k}-y_{ref}\right),
\end{equation}
\begin{equation} \label{equ:stage_cost}
\begin{split}
l\left(y_{i|k}, u_{i|k}, y_{ref}\right)=&\left(y_{i|k}-y_{ref}\right)^\top Q \left(y_{i|k}-y_{ref}\right)\\
&+ \Delta u_{i|k}^\top R \Delta u_{i|k}.
\end{split}
\end{equation}
In \eqref{equ:stage_cost}, $\Delta u_{i|k} :=u_{i|k}-u_{i-1|k}$ for all $i=0,\ldots,N-1$ and $u_{-1|k}:=u_{0|k-1}$. The penalty of the difference between two consecutive inputs $\Delta u_{i|k}$ will result in less switching.
However, this also affects the steady--state behavior of the output and system states, as it will be illustrated in this paper.

Based on the cost function \eqref{equ:standard_FCS_MPC_cost}, the system model \eqref{equ:linear_sys} and (optional) state constraints ($x(k) \in \mathbb{X}$) or output constraints ($y(k) \in \mathbb{Y}$), the tracking FCS-MPC can be formulated as the following integer optimization problem, assuming that the full state is measurable (or observable) \cite{Guidelines_FCS_MPC}:
\begin{equation} \label{equ:standard_FCS_MPC_optmization}
\begin{aligned}
\min _{u_{0|k},\ldots,u_{N-1|k} } &  J\left(x(k), U_{k}, y_{ref}\right) \\
\text { s.t. } 
& x_{i+1|k}=Ax_{i|k}+Bu_{i|k}, \quad &\forall i=0, \ldots, N-1,\\
& y_{i|k}=Cx_{i|k}, \quad &\forall i=0, \ldots, N,\\
& u_{i|k} \in \{0,1\}^{m}, \quad &\forall i=0, \ldots, N-1, \\
& x_{i|k} \in \mathbb{X}\,\, (\text{or } y_{i|k} \in \mathbb{Y}), \quad &\forall i=1, \ldots, N.\\
\end{aligned}
\end{equation}
One common solution to this integer optimization problem \eqref{equ:standard_FCS_MPC_optmization} is to use the brute-force approach of exhaustive enumeration. Alternatively, 
dedicated optimization algorithms such as branch-and-bound and non-trivial prediction horizon formulations \cite{non_trivial_prediction} can be considered. The optimal sequence of the predicted input set $U_{k}^*=\left\{u_{0|k}^{*}, \ldots, u_{N-1|k}^{*}\right\}$ can be derived after solving the optimization problem \eqref{equ:standard_FCS_MPC_optmization}, the first element $u_{0|k}^*$ in $U_{k}^*$ is selected as the optimal input.
Notice that because the set of optimization variables is finite, a (possibly non--unique) global optimum exists.

\subsection{Problem formulation}
For a switched affine system such as \eqref{equ:linear_sys}, the system output $y(k)$ cannot be controlled to maintain the exact output reference $y_{ref}$ for the averaged model at steady state; instead, the output will oscillate around the averaged model reference output, which will generate a periodic trajectory, also known as a limit cycle. 

In this paper, the first considered problem is the design of a limit cycle $\overline{X}_c:=\{\overline{x}(0),\ldots,\overline{x}(p-1)\}$ at steady state with a fixed length $p\geq 1$. Such a predefined limit cycle $\overline{X}$ is optimized to generate a suitable trajectory of $y(k)$, while satisfying certain criteria such as least mean deviation with respect to $y_{ref}$ or minimum ripple amplitude.

The second considered problem is how to design the cost function of FCS--MPC such that the convergence to the pre--computed limit cycle is achieved.
\section{Main results} \label{section3}

\subsection{Computation of steady--state limit cycles}
For system \eqref{equ:linear_sys}, any combination of a $p$-periodic input sequence ${U}_c=\{{u}_{c}(0) ,\ldots , {u}_{c}(p-1)\}$ can generate a corresponding $p$-periodic limit cycle $\overline{X}_c$. Here, we use the index $c$ to denote one of possible combination of discrete inputs, which are $(2^m)^p$ in total for a limit cycle of length $p$, i.e., $c \in \mathbb{Z}_{[1,(2^m)^p]}$.

Based on the property of a $p$-periodic limit cycle, i.e., ${x}_{c}(p)={x}_{c}(0)$, we obtain via \eqref{equ:linear_sys} that
\begin{equation}
{x}_{c}(0) = (I-A^p)^{-1} 
\begin{bmatrix}
A^{p-1}B & A^{p-2}B & ... & B
\end{bmatrix}
\begin{bmatrix}
{u}_{c}(0)\\{u}_{c}(1)\\...\\{u}_{c}(p-1)
\end{bmatrix},
\end{equation}
\begin{equation}
{x}_{c}(i+1)=A{x}_{c}(i)+B{u}_{c}(i), \quad \forall i=0, \ldots, p-2.
\end{equation}

Thus, a family of limit cycles $\mathfrak{X}$ can be derived as follows
\begin{equation}
\mathfrak{X}=\left\{{X}_c:  c \in \mathbb{Z}_{[1,(2^m)^p]}\right\},
\end{equation}
where $\mathfrak{X}$ contains $(2^m)^p$ different limit cycles.

According to \cite{limit_cycle}, an optimal limit cycle $\overline{X}_{c}$ which satisfies certain criteria can be calculated by solving an optimization problem, i.e.,
\begin{equation} \label{equ:limit_cycle_optimize}
\overline{X}_{c}=\arg \min_{{X}_{c} \in \mathfrak{X}} \frac{1}{p} \sum_{n=0}^{p-1}\left\|\Gamma\left(C{x}_{c}(n)-y_{ref}\right)\right\|,
\end{equation}
where $\Gamma$ is a tuning parameter and different norms can be used. 
The optimization problem \eqref{equ:limit_cycle_optimize} aims to find the limit cycle with the least mean distance between $\Gamma C\overline{x}_{c}(n)$ and $\Gamma y_{ref}$.

\subsection{Formulation of FCS--MPC for limit cycle tracking}

Different from tracking a constant output reference $y_{ref}$ as formulated in \eqref{equ:standard_FCS_MPC_cost}, the cost function of tracking the optimal limit cycle  $\overline{X}_{c}$ and its corresponding optimal input $\overline{U}_{c}$ is defined as
\begin{equation} \label{equ:Limit_Cycle_FCS_MPC_cost}
\begin{split}
J\left(x(k), U_{k}, \overline{X}_{c}, \overline{U}_{c}\right)=&F_{T}\left(x_{N|k},\overline{x}_{N|k}\right)\\
&+\sum_{i=0}^{N-1} l\left(x_{i|k}, u_{i|k}, \overline{x}_{i|k}, \overline{u}_{i|k}\right)
\end{split}
\end{equation}
where
\begin{equation} \label{equ:mod}
\begin{split}
\overline{x}_{i|k} = \overline{x}_{c}(k+i\; \text{mod}\; p), \quad &\forall i=0, \ldots, N,\\
\overline{u}_{i|k} = \overline{u}_{c}(k+i\; \text{mod}\; p), \quad &\forall i=0, \ldots, N-1,
\end{split}
\end{equation}
\begin{equation} \label{equ:terminal_cost}
F_{T}\left(x_{N|k},\overline{x}_{N|k}\right)=\left(x_{N|k}-\overline{x}_{N|k}\right)^\top P\left(x_{N|k}-\overline{x}_{N|k}\right),
\end{equation}
\begin{equation}
\begin{split}
l\left(x_{i|k}, u_{i|k}, \overline{x}_{i|k}, \overline{u}_{i|k}\right)&=
\left(x_{i|k}-\overline{x}_{i|k}\right)^\top Q \left(x_{i|k}-\overline{x}_{i|k}\right)\\
&+ \left(u_{i|k}-\overline{u}_{i|k}\right)^\top R \left(u_{i|k}-\overline{u}_{i|k}\right).
\end{split}
\end{equation}

Therefore, the FCS--MPC for tracking an optimal limit cycle can be reformulated as the integer optimization problem:
\begin{equation} \label{equ:Limit_Cycle_FCS_MPC_optmization}
\begin{aligned}
\min _{u_{0|k},\ldots,u_{N-1|k} } &  J\left(x(k), U_{k}, \overline{X}_{c},\overline{U}_{c}\right) \\
\text { s.t. } 
& x_{i+1|k}=Ax_{i|k}+Bu_{i|k}, \quad &\forall i=0, \ldots, N-1\\
& u_{i|k} \in \{0,1\}^{m}, \quad &\forall i=0, \ldots, N-1,
\\
& x_{i|k} \in \mathbb{X}, \quad &\forall i=1, \ldots, N.\\
\end{aligned}
\end{equation}
The optimal input signal $u^*(k)$ is the first element $u_{0|k}^*$ in $U_{k}^*$.

\subsection{Convergence analysis}
Next, we provide a convergence result for FCS--MPC with limit cycle tracking based on the shifted sequence of inputs approach, as typically done in stabilizing MPC with a continuous (infinite) control set. 
\begin{theorem}
Consider the limit cycle tracking problem using FCS-MPC for the system \eqref{equ:linear_sys}.
If the optimization problem \eqref{equ:Limit_Cycle_FCS_MPC_optmization} is always feasible,
the asymptotic convergence to the specified optimal limit cycle is guaranteed
if the matrix $A$ is Schur stable and the terminal cost matrix $P$ in \eqref{equ:terminal_cost} is chosen a positive definite solution of the linear matrix inequality $-P+Q+A^\top PA\prec0$.
\end{theorem}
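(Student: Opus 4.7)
The plan is to follow the standard shifted--sequence Lyapunov argument from stabilizing MPC, taking as candidate Lyapunov function the optimal value function $V(k):=J\bigl(x(k),U_{k}^{*},\overline{X}_c,\overline{U}_c\bigr)$ of \eqref{equ:Limit_Cycle_FCS_MPC_optmization}. Convergence to the optimal limit cycle reduces to establishing the descent inequality
\begin{equation*}
V(k+1)-V(k)\;\le\;-l\bigl(x(k),u_{0|k}^{*},\overline{x}_{0|k},\overline{u}_{0|k}\bigr),
\end{equation*}
since (with $Q\succ 0$) the right--hand side bounds $-\lambda_{\min}(Q)\,\|x(k)-\overline{x}_{0|k}\|^{2}$ from above and nonnegativity of $V$ then forces $x(k)-\overline{x}_c(k\bmod p)\to 0$ by a standard summability argument.

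To obtain the descent, I first construct a feasible but generally suboptimal candidate sequence at $k+1$ by shifting the optimal sequence at $k$ and appending the nominal limit--cycle input:
\begin{equation*}
\tilde U_{k+1}=\bigl\{u_{1|k}^{*},\ldots,u_{N-1|k}^{*},\,\overline{u}_{N-1|k+1}\bigr\}.
\end{equation*}
The appended element $\overline{u}_{N-1|k+1}=\overline{u}_c\bigl((k+N)\bmod p\bigr)$ lies in $\{0,1\}^{m}$, so finite--control--set feasibility is automatic; state--constraint feasibility of the induced trajectory is inherited from the standing always--feasibility hypothesis.

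Next, I would exploit the modulo shift identities $\overline{x}_{i|k+1}=\overline{x}_{i+1|k}$, $\overline{u}_{i|k+1}=\overline{u}_{i+1|k}$ from \eqref{equ:mod}, together with the limit--cycle dynamics $\overline{x}_{N+1|k}=A\overline{x}_{N|k}+B\overline{u}_{N|k}$, to align the first $N-1$ predictions $\tilde x_{i|k+1}$ with the tail $x_{i+1|k}^{*}$ of the optimal prediction at $k$, and to reduce the terminal error to $\tilde x_{N|k+1}-\overline{x}_{N|k+1}=A e$, where $e:=x_{N|k}^{*}-\overline{x}_{N|k}$. A telescoping calculation then yields
\begin{equation*}
\tilde J(k+1)-J^{*}(k)=-l\bigl(x(k),u_{0|k}^{*},\overline{x}_{0|k},\overline{u}_{0|k}\bigr)+e^{\top}\bigl(-P+Q+A^{\top}PA\bigr)e,
\end{equation*}
where the input penalty in the last stage cost at $k+1$ vanishes since $\tilde u_{N-1|k+1}=\overline{u}_{N-1|k+1}$. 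The LMI assumption renders the quadratic term nonpositive, and optimality $J^{*}(k+1)\le\tilde J(k+1)$ closes the estimate.

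Schur stability of $A$ enters the argument only as the existence certificate for a positive--definite $P$ satisfying $A^{\top}PA-P\prec -Q$, since without it the LMI is infeasible. The principal obstacle I anticipate is purely bookkeeping: one must track the modulo indices carefully across the one--step shift, identify the last stage cost at $k+1$ as a pure state penalty at $x_{N|k}^{*}$ against $\overline{x}_{N|k}$, and note that the shifted reference satisfies the open--loop dynamics at the appended step---once these alignments are made, the terminal--cost cancellation is structurally identical to the classical stabilizing MPC proof with a time--varying tracking reference.
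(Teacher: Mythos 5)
Your proposal is correct and follows essentially the same route as the paper: the same shifted candidate sequence with the limit--cycle input $\overline{u}_{N-1|k+1}$ appended, the same cancellation reducing the terminal terms to $e^{\top}(-P+Q+A^{\top}PA)e$ with $e=x_{N|k}^{*}-\overline{x}_{N|k}$, and the same descent/summability conclusion, with Schur stability of $A$ serving only to certify existence of $P$. The one point to state a bit more carefully (the paper flags it explicitly as an assumption rather than deriving it) is that recursive feasibility of the optimization problem does not by itself guarantee that the \emph{specific} shifted candidate satisfies $x_{N|k+1}=Ax_{N|k}^{*}+B\overline{u}_{N-1|k+1}\in\mathbb{X}$; this must be assumed or the state constraints dropped.
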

\begin{proof}
Consider the optimal input sequence at time $k$
\begin{equation}
U_{k}^*=\left\{u_{0|k}^{*}, \ldots, u_{N-1|k}^{*}\right\}.
\end{equation}
A shifted input sequence $\widetilde{U}_{k+1}$ for the next time instance $k+1$ is defined as
\begin{equation} \label{equ:U_shift}
\begin{split}
\widetilde{U}_{k+1}&=\left\{\widetilde{u}_{0|k+1}, \ldots, \widetilde{u}_{N-2|k+1}, \widetilde{u}_{N-1|k+1}\right\}\\
               &=\left\{u_{1|k}^{*}, \ldots, u_{N-1|k}^{*}, \widetilde{u}_{N-1|k+1}\right\},
\end{split}
\end{equation}
where the last term $\widetilde{u}_{N-1|k+1}$ in \eqref{equ:U_shift} is not assigned yet.

Assuming that the optimization problem \eqref{equ:Limit_Cycle_FCS_MPC_optmization} is always feasible (i.e., there exists at least one sequence in the set of discrete inputs that results in a state trajectory that does not violate the state constraints), the cost function at time instance $k+1$ associated with the shifted input sequence satisfies the following relation:
\begin{equation} \label{equ:dV}
\begin{split}
J \big(x(k+1), \widetilde{U}_{k+1},& \overline{X}_{c}, \overline{U}_{c}\big)\\
= &J\left(x(k), U_{k}^*, \overline{X}_{c},\overline{U}_{c}\right)\\
- &l\left(x_{0|k}, u_{0|k}^*, \overline{x}_{0|k}, \overline{u}_{0|k}\right)\\
- &F_{T}\left(x_{N|k}^*,\overline{x}_{N|k}\right)\\
+ &l\left(x_{N-1|k+1}, \widetilde{u}_{N-1|k+1}, \overline{x}_{N-1|k+1}, \overline{u}_{N-1|k+1}\right)\\
+ &F_{T}\left(x_{N|k+1},\overline{x}_{N|k+1}\right).
\end{split}
\end{equation}
As done in standard stabilizing MPC, by optimality at time $k+1$, in order to ensure that the optimal cost function $J$ is strictly decreasing at time $k+1$, it is necessary that the sum of the last three terms in \eqref{equ:dV} is non--positive. Then 
\begin{equation}
\begin{split}
&J\left(x(k+1), U_{k+1}^*, \overline{X}_{c^*},\overline{U}_{c^*}\right)
- J\left(x(k), U_{k}^*, \overline{X}_{c^*},\overline{U}_{c^*}\right)\\
\leq &J\left(x(k+1), \widetilde{U}_{k+1}, \overline{X}_{c^*},\overline{U}_{c^*}\right)
- J\left(x(k), U_{k}^*, \overline{X}_{c^*},\overline{U}_{c^*}\right)\\
\leq &-l\left(x_{0|k}, u_{0|k}^*, \overline{x}_{0|k}, \overline{u}_{0|k}\right) < 0, \quad \forall x(k)\neq \overline{x}_{0|k}.
\end{split}
\end{equation}
The required condition on the last three terms in \eqref{equ:dV} is 
\begin{equation} \label{equ:last_3_term}
\begin{split}
-&\left(x_{N|k}^*-\overline{x}_{N|k}\right)^\top P\left(x_{N|k}^*-\overline{x}_{N|k}\right)\\
+&\left(x_{N-1|k+1}-\overline{x}_{N-1|k+1}\right)^\top Q \left(x_{N-1|k+1}-\overline{x}_{N-1|k+1}\right)\\
+&\left(\widetilde{u}_{N-1|k+1}-\overline{u}_{N-1|k+1}\right)^\top R \left(\widetilde{u}_{N-1|k+1}-\overline{u}_{N-1|k+1}\right)\\
+&\left(x_{N|k+1}-\overline{x}_{N|k+1}\right)^\top P \left(x_{N|k+1}-\overline{x}_{N|k+1}\right)\leq 0.
\end{split}
\end{equation}

Based on the choice of the shifted sequence, we have that $x_{N-1|k+1}=x_{N|k}^*$, $x_{N|k+1}=Ax_{N-1|k+1}+B\widetilde{u}_{N-1|k+1}$. The periodic reference  $\overline{x}_{i|k}$ defined in \eqref{equ:mod} gives that $\overline{x}_{N-1|k+1}=\overline{x}_{N|k}$, $\overline{x}_{N|k+1}=A\overline{x}_{N-1|k+1}+B\overline{u}_{N-1|k+1}$. Then, by assigning the last term $\widetilde{u}_{N-1|k+1}$ in \eqref{equ:U_shift} equal to $\overline{u}_{N-1|k+1}$, \eqref{equ:last_3_term} becomes
\begin{equation}
\left(x_{N|k}^*-\overline{x}_{N|k}\right)^\top (-P+Q+A^\top PA) \left(x_{N|k}^*-\overline{x}_{N|k}\right).\\
\end{equation}
This is true under the assumption that $x_{N|k+1}=Ax_{N|k}^\ast+B\bar{u}_{N-1|k+1}\in\mathbb{X}$, or if there are no state constraints.

Thus, it can be concluded that, if the open loop dynamics $A$ of system \eqref{equ:linear_sys} is strictly stable, then for any prediction horizon $N$, there always exists a terminal cost matrix $P\succ 0$ such that $-P+Q+A^\top PA\prec0$ and the optimal cost function $J$ is strictly decreasing at every discrete--time instant $k\geq 1$. Since $J$ is positive definite and lower bounded by zero, we have that 
\begin{equation} \label{equ:limit_J}
    \begin{split}
        \lim_{k\to\infty} J\big(x(k), U_{k}^\ast,& \overline{X}_c, \overline{U}_c\big)\\
        =&\lim_{k\to\infty} 
        \left(x_{N|k}^\ast-\overline{x}_{N|k}\right)^\top P\left(x_{N|k}^\ast-\overline{x}_{N|k}\right)\\
        +&\sum_{i=0}^{N-1}\left(
        \left(x_{i|k}^\ast-\overline{x}_{i|k}\right)^\top Q \left(x_{i|k}^\ast-\overline{x}_{i|k}\right)\right. \\ +&\left.
        \left(u_{i|k}^\ast-\overline{u}_{i|k}\right)^\top R \left(u_{i|k}^\ast-\overline{u}_{i|k}\right)\right)\\
        =&0.
    \end{split}
\end{equation}
Since each term in \eqref{equ:limit_J} is a positive definite quadratic form for $P,Q,R\succ 0$ it follows that
\begin{equation} \label{equ:limit_J}
\begin{split}
\lim_{k\to\infty} \|x(k)-\overline{x}_{0|k}\| = 0, \quad \lim_{k\to\infty} \|u^*(k)-\overline{u}_{0|k}\| = 0,
\end{split}
\end{equation}
where $\overline{x}_{0|k}$ and $\overline{u}_{0|k}$ are periodically time--varying within their corresponding steady--state limit cycle trajectories as defined in \eqref{equ:mod}. By the shift rule for converging sequences it further follows that for $\forall i=1,\ldots,p-1$, we have
\begin{equation} \label{equ:limit_Jshift}
\lim_{k\to\infty} \|x(k+i)-\overline{x}_{0|k+i}\| = 0, \quad
\lim_{k\to\infty} \|u^*(k+i)-\overline{u}_{0|k+i}\| = 0.
\end{equation}
Thus, it follows that for all $k\geq 0$, the closed--loop state and input trajectories $x(k)$ and $u^*(k)$ asymptotically converge to $\overline{x}_{0|k} = \overline{x}_{c}(k\; \text{mod}\; p)$ and $\overline{u}_{0|k} = \overline{u}_{c}(k\; \text{mod}\; p)$, respectively. 
\end{proof}
The condition for calculating the terminal cost matrix $P$ here is similar to the stability conditions for the standard linear MPC;
the difference is that the convergence is guaranteed with respect to a periodic varying reference (a limit cycle).
Typically, the stability of FCS--MPC is analyzed using the quantized local controllers and practical stability \cite{practical_stability} notions. So this provides a useful result for FCS--MPC.

\section{APPLICATION TO POWER AMPLIFIERS}\label{section4}

In this section we demonstrate the standard FCS--MPC and the developed FCS--MPC design framework, respectively, for controlling an industrial power amplifier used in high--precision motion systems for the lithography industry.

\subsection{Industrial Power Amplifier Topology and Modelling}
\begin{figure}[h] 
\centering
\includegraphics[width=0.3\textwidth]{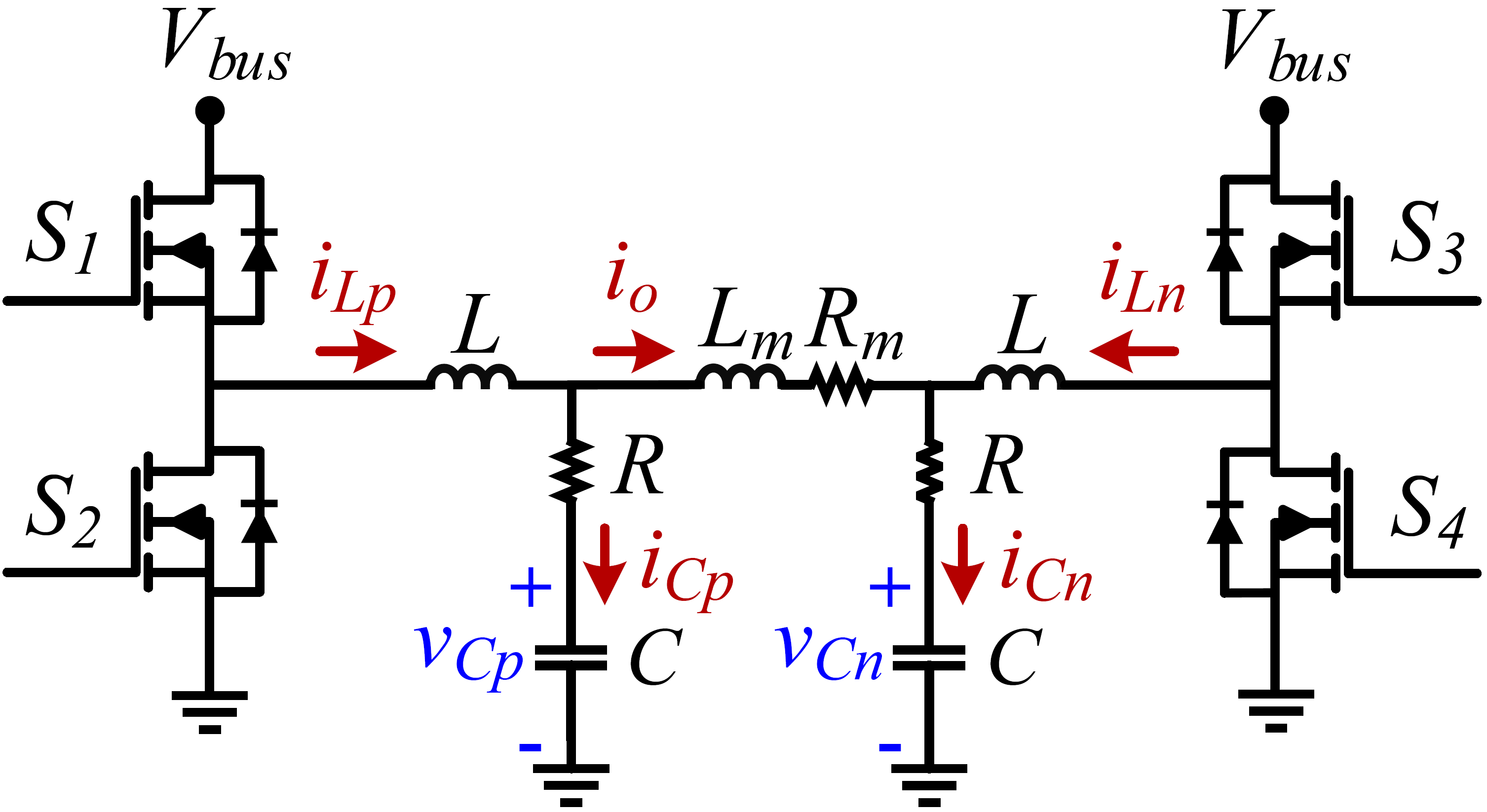}
\caption{Schematic of the industrial power amplifier.}
\label{fig:PADC_schematic}
\end{figure}
\begin{table}[h] 
\centering
\caption{Industrial Power Amplifier Circuit Parameters}
\label{Table:Circuit Parameters}
\begin{tabular}{llll}
\hline
parameter & Symbol &  Value \\ \hline
Bus voltage & $V_{bus}$ &  $360V$ \\
Power stage inductance & $L$ &  $44\mu H$ \\
Power stage capacitance & $C$ &  $0.4\mu F$ \\
Power stage parasitic resistance & $R$ &  $62.2 \mu \Omega$ \\
Load model inductance & $L_m$ &  $20mH$ \\
Load model resistance & $R_m$ &  $10\Omega$ \\
\hline
\end{tabular}
\end{table}
The schematic of the considered high precision industrial current amplifier is depicted in Fig. \ref{fig:PADC_schematic} \cite{Faran_PADC}. The amplifier is divided into two identical power stages connected to the load model. These power stages will be identified as the positive and negative power stages, with subscript $p$ or $n$. 
The circuit parameters are listed in Table~\ref{Table:Circuit Parameters}.
The power stages are switched synchronously, as defined in \eqref{eq:input_switching}. The switching mode of the amplifier can be expressed as a input vector $u(t)$ as in \eqref{eq:input_vector}. This power amplifier has four operation modes in total with respect to the switching states ($S_1,S_2,S_3,S_4$) as listed in Table \ref{tabel:operation_modes}. This can alternatively be expressed as a finite control set of states $\mathbb{S}$ in \eqref{eq:inputspace}.
\begin{equation} \label{eq:input_switching}
    S_p = S_1 = \overline{S_2}, \quad S_n = S_3 = \overline{S_4}
\end{equation}
\begin{equation} \label{eq:input_vector}
    u(t) = 
    \begin{bmatrix}
        S_p(t)\\
        S_n(t)
    \end{bmatrix},\;
    \begin{matrix}
    S_p \in \{0,1\} \\
    S_n \in \{0,1\}
    \end{matrix}
\end{equation}
\begin{equation} \label{eq:inputspace}
   u(t) \in \mathbb{S} := \left\{
    \begin{bmatrix}
        0\\
        0
    \end{bmatrix},
    \begin{bmatrix}
        1\\
        0
    \end{bmatrix},
    \begin{bmatrix}
        0\\
        1
    \end{bmatrix},
    \begin{bmatrix}
        1\\
        1
    \end{bmatrix}
    \right\}.
\end{equation}
\begin{table}[!h]
	\centering
	\caption{Operation Modes of the Power Amplifier}
	\label{tabel:operation_modes}
    \begin{tabular}{|c|c|c|c|c|}
    \hline Modes & $S_{1}$ & $S_{2}$ & $S_{3}$ & $S_{4}$ \\
    \hline 1 & open & close & open & close \\
    \hline 2 & open & close & close & open \\
    \hline 3 & close & open & open & close \\
    \hline 4 & close & open & close & open \\
    \hline
\end{tabular}
\end{table}
The switched continuous state-space model \eqref{equ:switched_model} can be derived by using Kirchhoff's circuit laws for each of the switch modes. The system states are $x=[\begin{matrix} i_{Lp} & v_{Cp} & i_{Ln} & v_{Cn} & i_{o} \end{matrix}]^\top$. 
The measured output is the output current $i_{o}$.
The obtained model is the following: 
\begin{equation} \label{equ:switched_model}
\begin{split}
\dot{x}(t)&= A_c x(t) + B_c u(t),
\\
y(t)&=C_c x(t),
\end{split}
\end{equation}
where
\begin{equation}
\begin{gathered}
A_{c} =\left[\begin{smallmatrix}
-\frac{R}{L} & -\frac{1}{L} & 0 & 0 & \frac{R}{L} \\
\frac{1}{C} & 0 & 0 & 0 & -\frac{1}{C} \\
0 & 0 & -\frac{R}{L} & -\frac{1}{L} & -\frac{R}{L} \\
0 & 0 & \frac{1}{C} & 0 & \frac{1}{C} \\
\frac{R}{L_{m}} & \frac{1}{L_{m}} & -\frac{R}{L_{m}} & -\frac{1}{L_{m}} & -\frac{2 R+R_{m}}{L_{m}}
\end{smallmatrix}\right]
B_c = \left[\begin{smallmatrix}
        \frac{V_{bus}}{L} &0   \\
        0&0\\
        0&\frac{V_{bus}}{L}\\
        0&0\\
        0&0
    \end{smallmatrix}\right]
C_c=\left[\begin{smallmatrix}
        0 \\ 0 \\ 0 \\ 0 \\ 1
    \end{smallmatrix}\right]^\top_.
    \label{Eq: State space C}
\end{gathered}
\end{equation}
Above, $A_c \in \mathbb{R}^{5\times5}$, $B_c \in \mathbb{R}^{5\times2}$ and $C_c \in \mathbb{R}^{1\times5}$. 
The linear switched model is discretized using the zero-order-hold method with the sampling frequency $f_s$, i.e.,
\begin{equation} \label{eq:averaged_model_dis}
\begin{gathered}
x(k+1)=Ax(k)+Bu(k),
\\
y(k)= Cx(k),
\end{gathered}
\end{equation}
where $x=[\begin{matrix} i_{Lp} & v_{Cp} & i_{Ln} & v_{Cn} & i_{o} \end{matrix}]^\top$, $y=i_{o}$, $u(k) \in \mathbb{S}$, $A \in \mathbb{R}^{5x5}$, $B \in \mathbb{R}^{5x2}$ and $C \in \mathbb{R}^{1x5}$.

\subsection{Simulation setup}
The performances of the standard FCS-MPC and limit cycle tracking FCS-MPC are simulated using MATLAB. The integer optimization problems \eqref{equ:standard_FCS_MPC_optmization}\eqref{equ:Limit_Cycle_FCS_MPC_optmization} are solved with the state--of--the--art integer optimization Gurobi solver. All simulations are performed with the following specifications: Desktop computer, AMD Ryzen 7 3800XT CPU 4.0GHz, 32GB RAM, 64--bit OS. The system output $y(k)$ is controlled to track a step signal of the output reference $y_{ref}=6$A. The system state $x(k)$ is sampled with a frequency of $f_s = 400$ kHz. The FCS-MPC controller is updated with the same sampling frequency $f_s$.
The initial state is $x(0)=\begin{bmatrix}
 0  &  0  &  0 & 0 & 0
\end{bmatrix}^\top$, as it typically happens at start--up.

The optimal limit cycle for such output reference $y_{ref}$ is calculated using \eqref{equ:limit_cycle_optimize} with 2--norm and with $\Gamma = 1$, which minimizes the mean distance between $\Gamma C\overline{x}_{c}(n)$ and $\Gamma y_{ref}$. The periodicity $p$ is set to $p=6$, which is the smallest length that can attain an averaged current of $i_o$ around 6A. The calculated optimal input and state limit cycles are as follows:
\begin{equation} \label{equ:Uss}
 \overline{U}_c=\left\{
    \begin{bmatrix} 1\\ 0\end{bmatrix}, 
    \begin{bmatrix} 0\\ 1\end{bmatrix}, 
    \begin{bmatrix} 1\\ 0\end{bmatrix}, 
    \begin{bmatrix} 0\\ 0\end{bmatrix}, 
    \begin{bmatrix} 0\\ 0\end{bmatrix}, 
    \begin{bmatrix} 0\\ 0\end{bmatrix}
    \right\},
\end{equation}
\begin{equation} \label{equ:Xss}
 \overline{X}_{c}=
    \left\{\begin{smallmatrix}
    -7.3138	& 8.2415 & 3.7586 & 19.3138	& 11.0141 & 0.9858\\
    112.2068 & 76.5487 & 76.5490 & 112.2073 & 171.2443 & 171.2440\\
    -14.0433 & -16.2847 & 4.2848 & 2.0433 & -2.9709 & -9.0291\\
    67.7927 & 8.7557 & 8.7560 & 67.7932 & 103.4513 & 103.4510\\
    6.0003 & 5.9987 & 6.0013 & 5.9997 & 5.9994 & 6.0006\\
    \end{smallmatrix}\right\}.
\end{equation}
In order to have an intuitive illustration, the input switching state can be transformed to the corresponding operation mode according to Table \ref{tabel:operation_modes}.
Thus, the optimal input limit cycle \eqref{equ:Uss} is transformed into a sequence of operation modes, which is $\{3,2,3,1,1,1\}$.

\subsection{Standard FCS-MPC performance}
As illustrated in Section~\ref{section2}.A, the standard FCS-MPC controller is formulated as in Table \ref{Table:simulations_FCS_MPC}. The state constraints are omitted, thus the performance of the unconstrained standard FCS-MPC is evaluated for 2 different prediction horizons. The reason for selecting $N=3$, 4 is that the results do not converge when $N<3$ and the performances behave the same when $N\geq4$.
\begin{table}[h!] \caption{Standard FCS-MPC Control Setup}
\centering
\label{Table:simulations_FCS_MPC}
\begin{tabular}{llll}
\hline
Parameter & Symbol &  Value \\ \hline
Prediction horizon & $N$ &  3, 4 \\
Weighting matrix & $P=Q$ & 1 \\
Weighting matrix & $R$ & diag($10^{-4}$, $10^{-4}$)\\
State constraints & $\mathbb{X}$ &  $x_{i|k} \in \mathbb{R}^{5}$ \\
\hline
\end{tabular}
\end{table}

The performance of the output current $i_o$ with standard FCS--MPC is shown in Fig. \ref{fig:Simulation1_1}, the $k$ in the $x$--axis denotes the simulation steps. 
Besides, the comparison of current ripple and average computation time for different prediction horizons is listed in Table \ref{Table:Sim1}.
For both $N=3$ and $N=4$, the output current $i_o$ has a fast transient and good steady--state performance.
The standard FCS--MPC achieves a current ripple of 18.9068 mA and 17.8828 mA in steady state after an overshoot of 14 mA and 33 mA respectively.
Compared to the optimal current ripple, which is $2.6153$ mA calculated from \eqref{equ:Xss}, the standard FCS--MPC generates a significantly larger current ripple.
Fig. \ref{fig:Simulation1_2} depicts the control action of the operation mode at steady--state. Both prediction horizons result in the same repeating sequence of $\{3,1,1,1,1,1\}$.
\begin{figure}[h!] 
\centering
\includegraphics[width=0.47\textwidth]{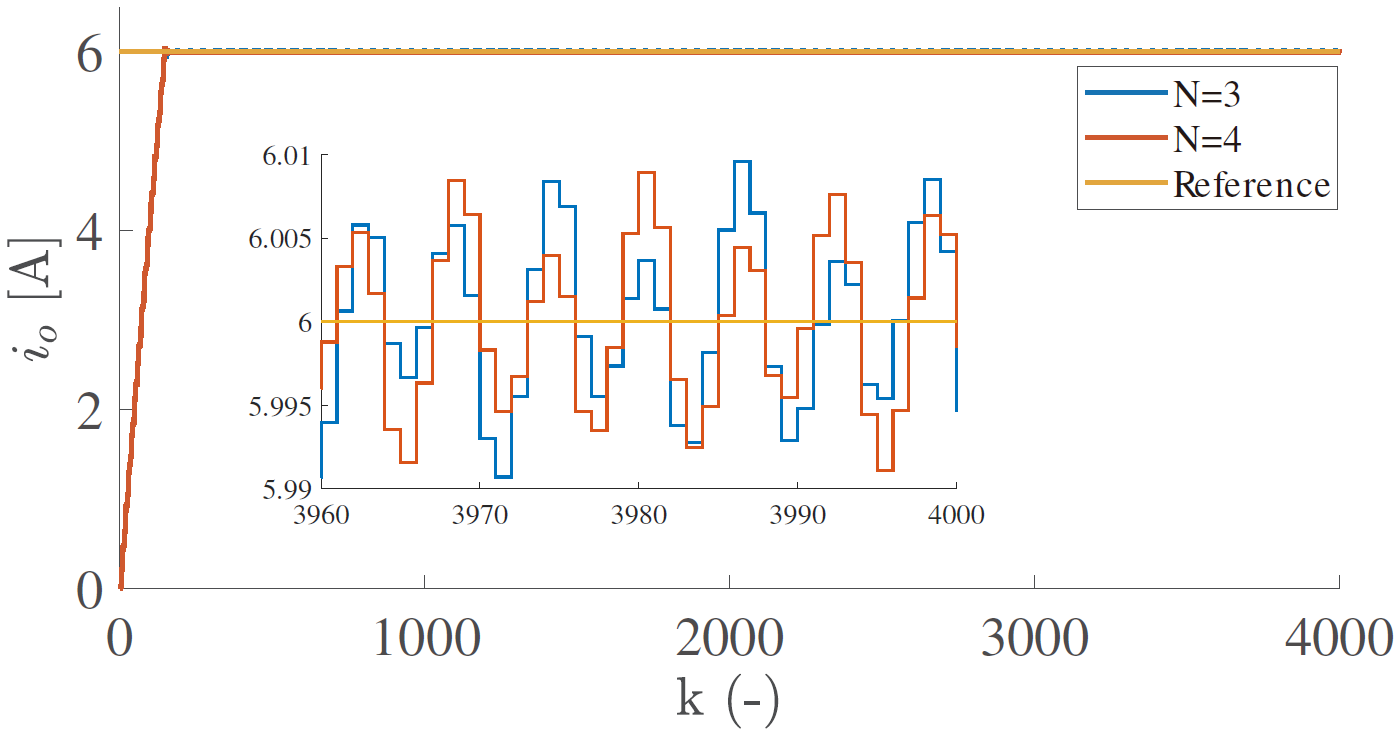}
\caption{Standard FCS--MPC: output current $i_o$.}
\label{fig:Simulation1_1}
\end{figure}
\begin{table}[h!] \caption{Standard FCS-MPC Performance}
\centering
\label{Table:Sim1}
\begin{tabular}{llll}
\hline
Prediction horizon  & Ripple of $i_o$  &  Average computation time  \\ \hline
$N=3$ & 18.9068 [mA] &  4.65 [ms] \\
$N=4$ & 17.8828 [mA] & 10.95 [ms] \\
\hline
\end{tabular}
\end{table}
\begin{figure}[h!] 
\centering
\includegraphics[width=0.47\textwidth]{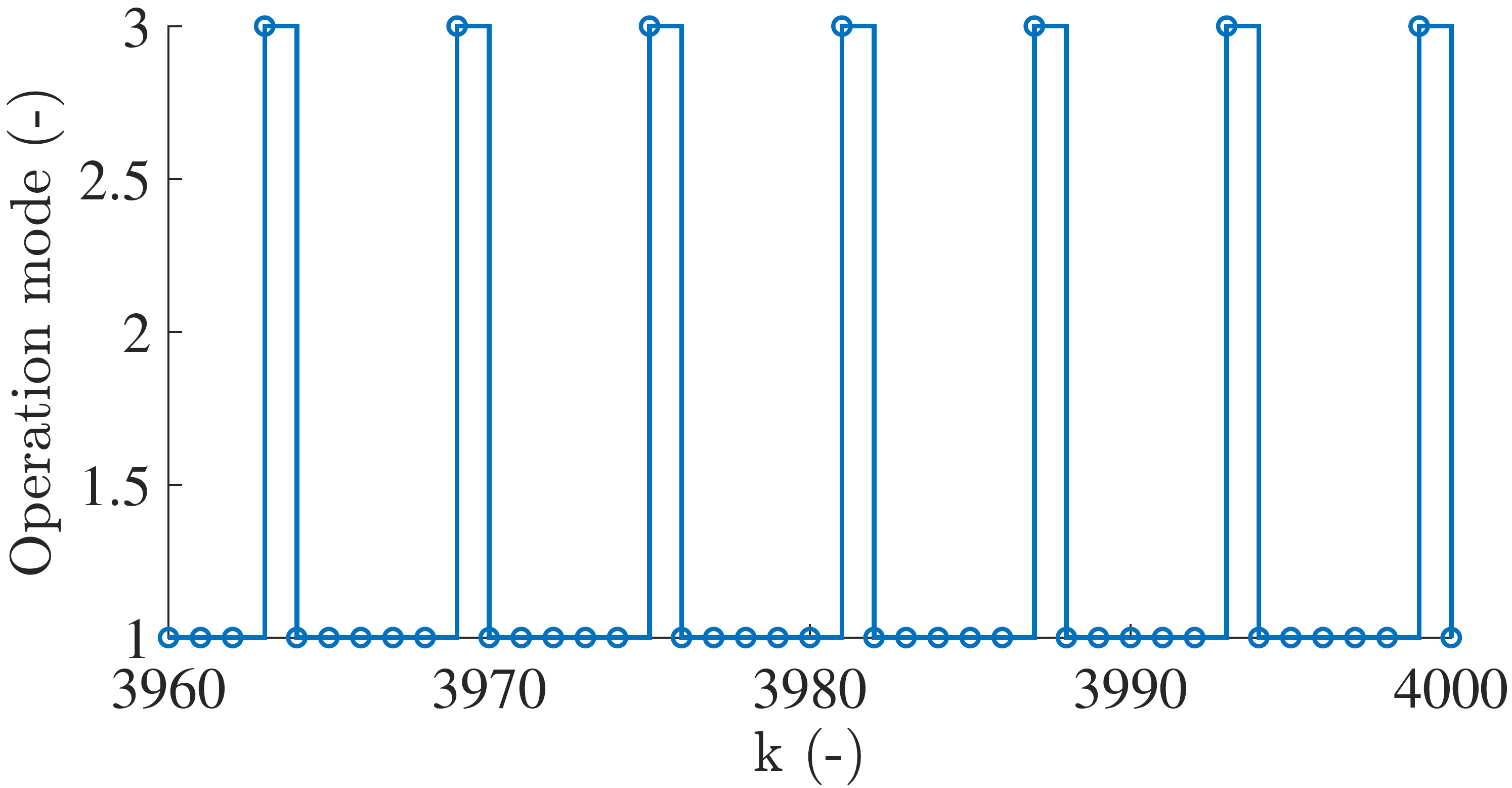}
\caption{Standard FCS--MPC: operation mode.}
\label{fig:Simulation1_2}
\end{figure}

\subsection{Limit cycle tracking FCS-MPC performance}
The limit cycle tracking FCS-MPC is implemented as illustrated in Section~\ref{section3}.B. 
According to \textbf{Theorem III.1}, the diagonal structure of $P$ was imposed and a feasible solution was found with the YALMIP MATLAB toolbox \cite{Lofberg2004}, but in general a diagonal $P$ may not exist. Table \ref{Table:simulations_steady_state_FCS_MPC} lists the detailed control setup. The last element in $Q$ is the penalty for output current $i_o$, and it is set to 1 which makes it comparable with the standard FCS--MPC.
\begin{table}[h!] \caption{Limit Cycle Tracking FCS-MPC Control Setup}
\centering
\label{Table:simulations_steady_state_FCS_MPC}
\begin{tabular}{llll}
\hline
Parameter & Symbol &  Value \\ \hline
Prediction horizon & $N$ &  4, 6, 8 \\
Weighting matrix & $P$ &  diag($2\cdot10^{4},189,2\cdot10^{4},189,9.5\cdot10^{6}$) \\
Weighting matrix & $Q$ &  diag($L/L_m$, $C/L_m$, $L/L_m$, $C/L_m$, 1)\\
Weighting matrix & $R$ &  diag($5\cdot10^{-2}$, $5\cdot10^{-2}$) \\
State constraints & $\mathbb{X}$ &  $x_{i|k} \in \mathbb{R}^{5}$ \\
\hline
\end{tabular}
\end{table}

The performance of output current $i_o$ with limit cycle tracking FCS--MPC is depicted in Fig. \ref{fig:Simulation2_1}. 
In addition, the comparison of current ripple and average computation time for different prediction horizons is listed in Table \ref{Table:Sim2}.
It is observed that a higher prediction horizon gives a faster transient response and better steady--state performance. For $N=8$, at steady--state, it results in the least offset to the optimal limit cycle and produces a peak to peak current ripple of 4.2102 mA which is close to the optimal current ripple of 2.6153 mA.
Similarly, Fig. \ref{fig:Simulation2_2} shows the control action of the operation mode, which behaves the same as the optimal input limit cycle.
Fig. \ref{fig:Simulation2_4} illustrates the state trajectory of the output current $i_{o}$ and the inductor current $i_{Lp}$ at steady--state when $N=8$. The state trajectory converges to the optimal state limit cycle $\overline{X}_{c}$ asymptotically.

\begin{figure}[h!] 
\centering
\includegraphics[width=0.47\textwidth]{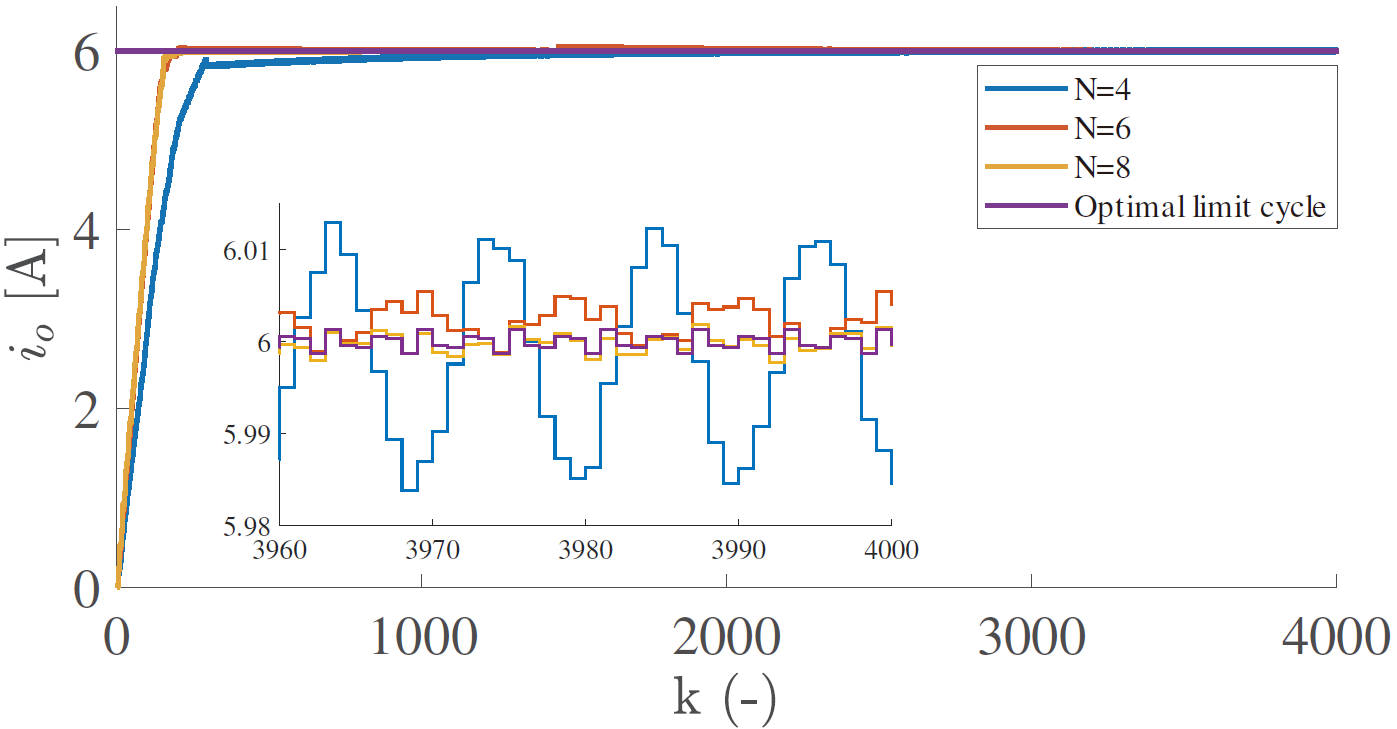}
\caption{Limit cycle tracking FCS--MPC: output current $i_o$.}
\label{fig:Simulation2_1}
\end{figure}
\begin{table}[h!] \caption{Limit Cycle Tracking FCS-MPC Performance}
\centering
\label{Table:Sim2}
\begin{tabular}{llll}
\hline
Prediction horizon  & Ripple of $i_o$  &  Average computation time  \\ \hline
$N=4$ & 29.1691 [mA] & 4.91 [ms] \\
$N=6$ & 6.8609 [mA] & 11.95 [ms] \\
$N=8$ & 4.2102 [mA] & 23.14 [ms] \\
\hline
\end{tabular}
\end{table}
\begin{figure}[h!] 
\centering
\includegraphics[width=0.47\textwidth]{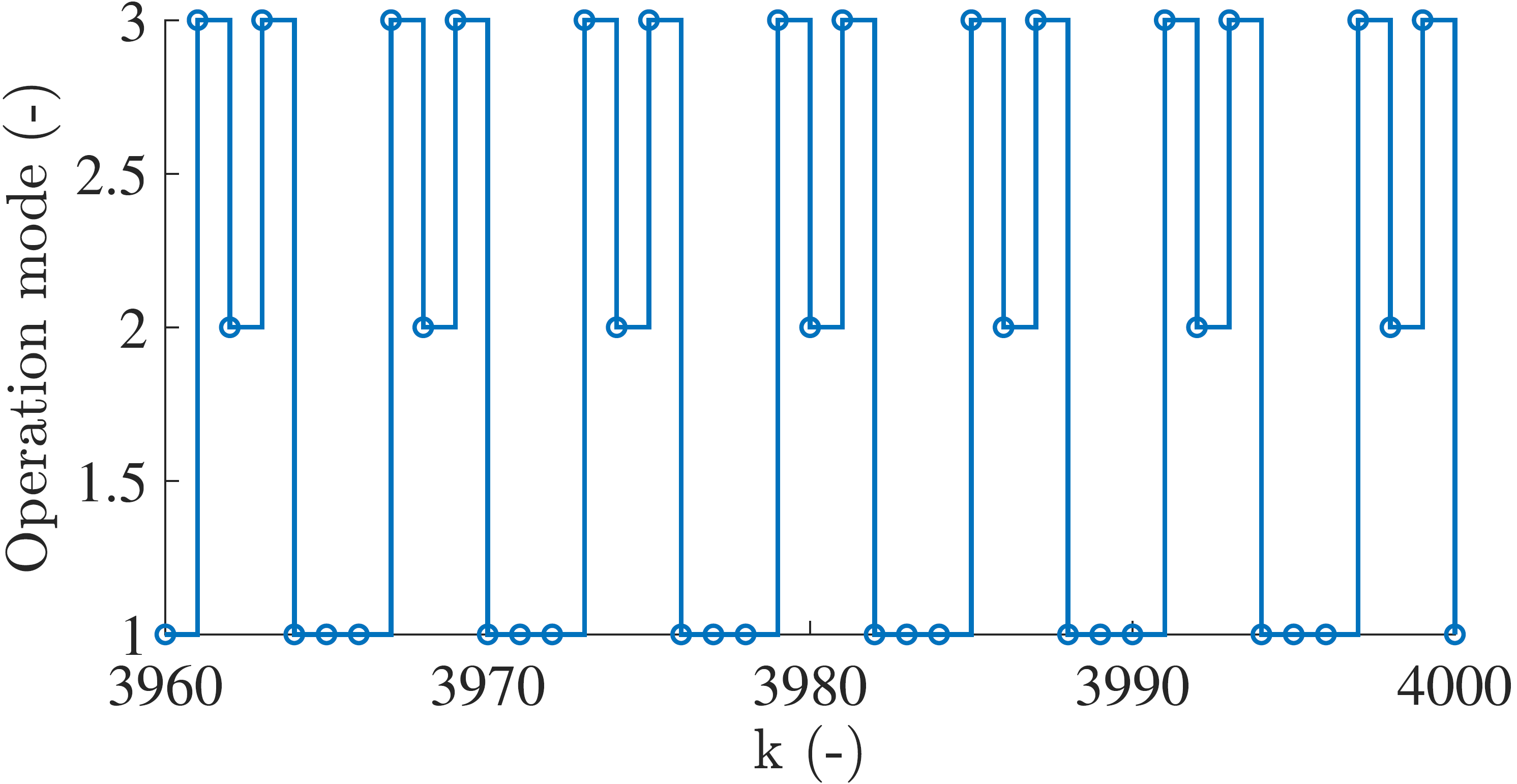}
\caption{Limit cycle tracking FCS--MPC: operation mode.}
\label{fig:Simulation2_2}
\end{figure}
\begin{figure}[h!] 
\centering
\includegraphics[width=0.47\textwidth]{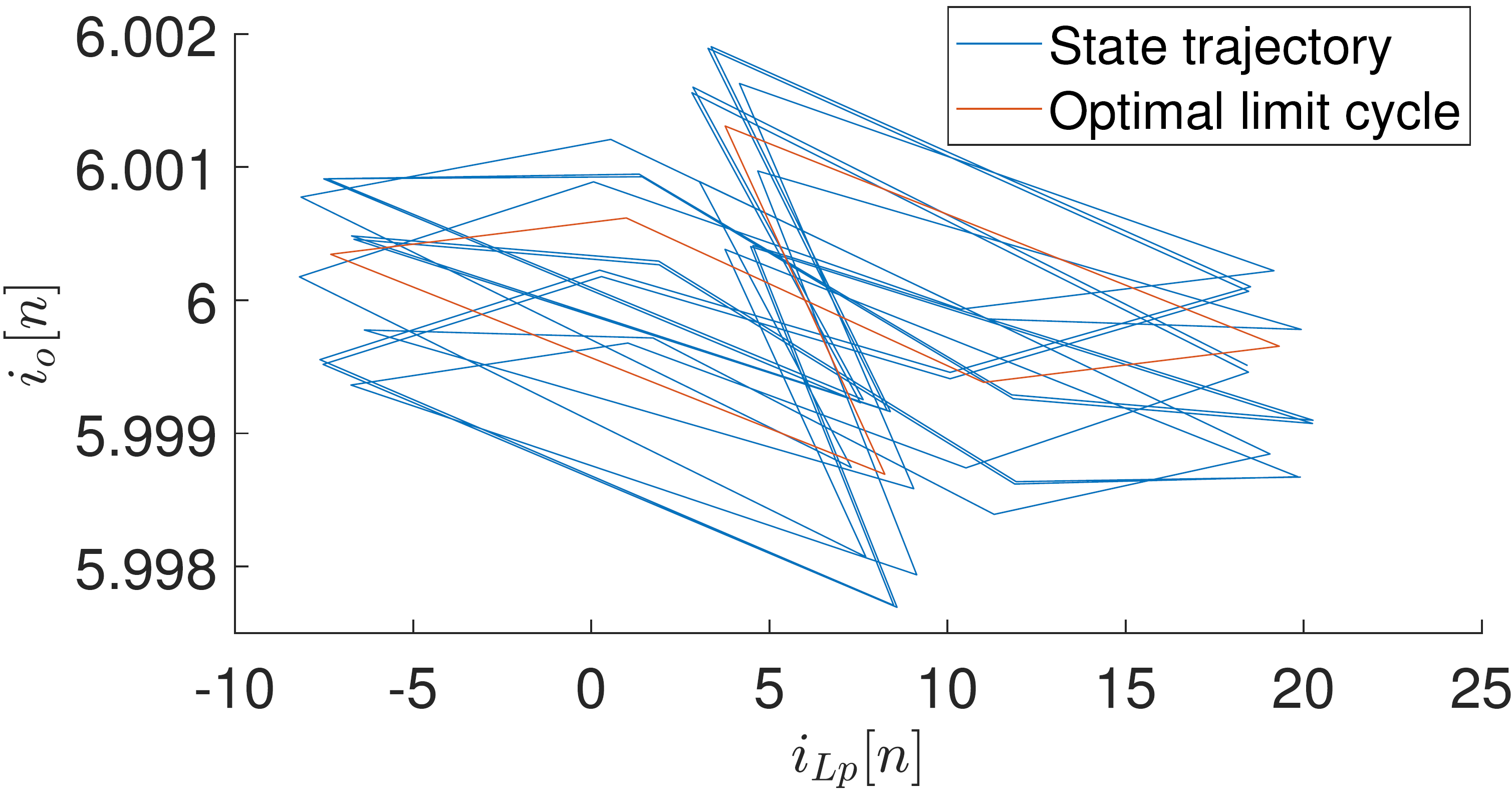}
\caption{Limit cycle tracking FCS--MPC: state trajectory.}
\label{fig:Simulation2_4}
\end{figure}

\section{CONCLUSIONS}\label{section5}

Inspired by the computation of an optimal steady--state limit cycle for a linear system with finite control inputs set, this paper proposed an FCS--MPC design to realize the tracking control of a limit cycle. The cost function of the FCS-MPC penalizes the tracking error with respect to the pre--computed state and input limit cycle trajectories. Then, the convergence to the pre--computed limit cycle was proven for a suitable terminal penalty stable open--loop dynamics. Compared with the standard FCS--MPC, the proposed limit cycle tracking FCS--MPC enables a better steady--state performance. The efficiency of the developed FCS--MPC design was demonstrated in output current control of a high--precision power amplifier, where it resulted in improved output current ripple.

\bibliographystyle{IEEEtran}
	\bibliography{SectionBib}

\addtolength{\textheight}{-12cm}   
 




\end{document}